\documentclass[oneside, 11pt, a4paper]{article}

\usepackage{amsmath}
\usepackage{amssymb}
\usepackage{amsthm}
\usepackage{natbib}
\usepackage{lmodern}

\newtheorem{theorem}{Theorem}
\newtheorem{lemma}[theorem]{Lemma}

\theoremstyle{remark}
\newtheorem*{remark}{Remark}

\newcommand{\norm}[1]{\lVert #1 \rVert}
\newcommand{\inprod}[2]{\left<{#1},{#2}\right>}

\begin{document}
\title{Objective and subjective entropy measures of portfolio suboptimality}
\author{Ati S. Sharma\thanks{\small ati@agalmic.ltd}}
\date{9 July 2026}

\maketitle

%--------------------------------------------------------------------

The Kelly betting strategy maximises the asymptotic exponential growth rate of a bettor's capital by staking a fraction of it, in proportion to the bettor's assessed probability \citep{Kelly_1956}.
Kelly showed that for repeated, compounding bets, the maximal growth rate is equal to the mutual information between the bettor's side channel and the true outcome \citep{Cover_1991}.
Adopted in finance by \citet{Latane_1959}, the analogous criterion has been studied extensively in continuous time \citep{Merton_1969, Thorp_1975}, where the Kelly portfolio maximises the expected drift of log-wealth.
The cost of deviating from this optimum, which we term the growth gap, is usually presented as an $L^2$ shortfall in the market price of risk.

We show that the same quantity admits two exact relative-entropy representations.
Under the true measure~$P$, the expected log-wealth shortfall equals the KL divergence from~$P$ to the tilted measure~$P^\pi$ under which~$\pi$ is optimal. 
Under~$P^\pi$, the same portfolio appears to outperform the Kelly portfolio, and the apparent outperformance equals the reverse KL divergence.

Consider a filtered probability space $(\Omega, \mathcal{F}, \mathbb{F}, P)$ with $\mathbb{F} = (\mathcal{F}_t)_{0 \leq t \leq T}$ satisfying the usual conditions \citep{Karatzas_1998, Oksendal_2003}, carrying an $m$-dimensional Brownian motion $W = (W_t)_{0 \leq t \leq T}$.

The market consists of a risk-free money-market account $B_t = \exp\left(\int_0^t r_s  ds\right)$ with short rate process $r_t$, and $d$ risky assets with discounted price vector $S_t \in \mathbb{R}^d$ satisfying
\[
    \frac{dS_t}{S_t} = r_t \mathbf{1}  dt + \mu_{e,t}  dt + \sigma_t  dW_t,
\]
where $\mu_{e,t} \in \mathbb{R}^d$ is the excess price-return drift, $\sigma_t \in \mathbb{R}^{d \times m}$ has full row rank a.s., and $\mathbf{1} \in \mathbb{R}^d$ is a vector of ones.

The market price of risk $\theta_t \in \mathbb{R}^m$ is defined by $\mu_{e,t} = \sigma_t \theta_t$.
For an admissible portfolio $\pi_t \in \mathbb{R}^d$, the undiscounted wealth process $X_t^\pi$ satisfies
\[
    \frac{dX_t^\pi}{X_t^\pi} = r_t  dt + \pi_t^\intercal \mu_{e,t}  dt + \pi_t^\intercal \sigma_t  dW_t.
\]
Define the undiscounted log-wealth $Y_t^\pi = \log X_t^\pi$.
Assume standard integrability conditions hold (e.g., Novikov condition) ensuring all It\^o integrals are martingales and expectations are finite.

\begin{theorem}[Growth of the Kelly portfolio]
\label{thm:Kelly growth}
The instantaneous growth rate (drift of $Y_t^\pi$) is
\[
    g_t^\pi = r_t + \inprod{ \sigma_t^\intercal \pi_t }{ \theta_t } - \frac{1}{2} \norm{\sigma_t^\intercal \pi_t}^2.
\]
The growth-optimal (Kelly) portfolio $\pi_t^\star$ for which the growth rate is maximised is characterised by $\sigma_t^\intercal \pi_t^\star = \theta_t$, and its growth rate is
\begin{equation}
    \label{eqn:Kelly growth}
    g_t^\star = r_t + \frac{1}{2} \norm{\theta_t}^2.
\end{equation}
\end{theorem}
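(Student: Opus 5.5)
We apply It\^o's formula to $Y_t^\pi = \log X_t^\pi$ and then maximise the resulting drift pointwise in $(t,\omega)$. Since $X^\pi$ is a positive It\^o process with
\[
    dX_t^\pi = X_t^\pi\bigl(r_t + \pi_t^\intercal \mu_{e,t}\bigr)\,dt + X_t^\pi\,\pi_t^\intercal \sigma_t\,dW_t ,
\]
It\^o's formula for $f(x)=\log x$ gives $dY_t^\pi = dX_t^\pi/X_t^\pi - \tfrac12\, d\langle X^\pi\rangle_t/(X_t^\pi)^2$. The quadratic variation is $d\langle X^\pi\rangle_t = (X_t^\pi)^2\,\pi_t^\intercal\sigma_t\sigma_t^\intercal\pi_t\,dt = (X_t^\pi)^2\norm{\sigma_t^\intercal\pi_t}^2dt$, because the components of $W$ are independent. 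Substituting $\mu_{e,t}=\sigma_t\theta_t$ gives $\pi_t^\intercal\mu_{e,t} = \inprod{\sigma_t^\intercal\pi_t}{\theta_t}$. Hence
\[
    dY_t^\pi = \Bigl(r_t + \inprod{\sigma_t^\intercal\pi_t}{\theta_t} - \tfrac12\norm{\sigma_t^\intercal\pi_t}^2\Bigr)dt + \pi_t^\intercal\sigma_t\,dW_t .
\]
The assumed integrability makes the stochastic integral a true martingale, so the $dt$-coefficient is unambiguously the drift $g_t^\pi$.

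For the optimisation, set $v=\sigma_t^\intercal\pi_t\in\mathbb{R}^m$ and complete the square:
\[
    g_t^\pi = r_t + \tfrac12\norm{\theta_t}^2 - \tfrac12\norm{v-\theta_t}^2 .
\]
So $g_t^\pi \le r_t + \tfrac12\norm{\theta_t}^2$, with equality if and only if $v=\theta_t$. It remains to show that this maximiser is attainable by some $\pi_t$. The point requiring care is that $v$ ranges only over the column space of $\sigma_t^\intercal$, which is a $d$-dimensional subspace of $\mathbb{R}^m$. However, the definition $\mu_{e,t}=\sigma_t\theta_t$ does not determine $\theta_t$ uniquely when $m>d$. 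We therefore take $\theta_t$ to be the minimal-norm solution $\theta_t=\sigma_t^\intercal(\sigma_t\sigma_t^\intercal)^{-1}\mu_{e,t}$; the matrix $\sigma_t\sigma_t^\intercal$ is invertible a.s.\ by the full row rank assumption. With this choice $\theta_t\in\operatorname{range}(\sigma_t^\intercal)$, and
\[
    \pi_t^\star = (\sigma_t\sigma_t^\intercal)^{-1}\mu_{e,t}
\]
satisfies $\sigma_t^\intercal\pi_t^\star=\theta_t$. Uniqueness of $\pi_t^\star$ follows from the injectivity of $\sigma_t^\intercal$.

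If instead $\theta_t$ were an arbitrary solution, write $\theta_t=\theta_t^{\parallel}+\theta_t^{\perp}$, with $\theta_t^{\parallel}$ the orthogonal projection onto $\operatorname{range}(\sigma_t^\intercal)$. Since $\inprod{v}{\theta_t^\perp}=0$ for every attainable $v$, the same argument gives maximal rate $r_t+\tfrac12\norm{\theta_t^\parallel}^2$. This agrees with \eqref{eqn:Kelly growth} exactly under the minimal-norm convention. I would state that convention explicitly, since it is the only genuinely non-routine point.

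Finally, $\pi^\star$ must be admissible, i.e.\ progressively measurable and satisfying the integrability assumptions. Measurability holds because $\pi_t^\star$ is a measurable function of $(\mu_{e,t},\sigma_t)$. Integrability is covered by the standing assumptions, so this step is routine. Maximising the drift pointwise then maximises $E[Y_T^\pi]$, which completes the characterisation.
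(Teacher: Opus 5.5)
Your proposal is correct and follows the paper's route: It\^o's formula for $\log X_t^\pi$ with quadratic variation $\norm{\sigma_t^\intercal\pi_t}^2\,dt$, followed by pointwise maximisation in $z_t=\sigma_t^\intercal\pi_t$; completing the square is equivalent to the paper's maximisation of $\inprod{z_t}{\theta_t}-\tfrac12\norm{z_t}^2$. Your attainability argument, using the minimum-norm choice $\theta_t=\sigma_t^\intercal(\sigma_t\sigma_t^\intercal)^{-1}\mu_{e,t}$ and $\pi_t^\star=(\sigma_t\sigma_t^\intercal)^{-1}\mu_{e,t}$, is exactly the convention the paper's accompanying remark presumes, written out more fully.
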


\begin{remark}
The equality $g^\star_t = r_t + \frac{1}{2}\|\theta_t\|^2$ presumes the canonical minimum-norm market price of risk, for which $\theta_t \in \operatorname{range}(\sigma_t^\intercal)$.
\end{remark}

\begin{proof}
Applying It\^o's lemma to $Y_t^\pi = \log X_t^\pi$,
\[
    dY_t^\pi = \frac{dX_t^\pi}{X_t^\pi} - \frac{1}{2} \frac{d[X^\pi]_t}{(X_t^\pi)^2}.
\]
The quadratic variation term is $\frac{d[X^\pi]_t}{(X_t^\pi)^2} = \norm{\sigma_t^\intercal \pi_t}^2  dt$.
Substituting $\mu_{e,t} = \sigma_t \theta_t$ yields
\[
    dY_t^\pi
    = \left[
        r_t
        + \inprod{ \sigma_t^\intercal \pi_t }{ \theta_t }
        - \frac{1}{2} \norm{\sigma_t^\intercal \pi_t}^2
    \right] dt
    + \pi_t^\intercal \sigma_t dW_t.
\]
Thus, $g_t^\pi = r_t + \inprod{ \sigma_t^\intercal \pi_t }{ \theta_t } - \frac{1}{2} \norm{\sigma_t^\intercal \pi_t}^2$.
Maximisation over $\pi_t$ is equivalent to maximising $\inprod{ z_t }{ \theta_t } - \frac{1}{2} \|z_t\|^2$ for $z_t = \sigma_t^\intercal \pi_t$, which is solved by $z_t = \theta_t$, giving $g_t^\star = r_t + \frac{1}{2} \norm{\theta_t}^2$.
The quadratic variation term is $\|\sigma_t^\intercal \pi_t\|^2\,dt$ because cross-variation terms vanish for the scalar wealth process (see \citet{Oksendal_2003,Karatzas_1998}).
\end{proof}

\begin{remark}[Leverage constraint and long-only]
The closed-form solution can produce weights whose absolute values sum to more than 1 (leverage) or negative weights (shorting).
Once we impose $\pi \geq 0,\  \mathbf{1}^\intercal \pi \leq 1$, the problem becomes a quadratic program with box constraints.
\end{remark}

Next, we see that the growth gap of a suboptimal portfolio $\pi$ equals the relative entropy between $P$ and a hypothetical measure $P^\pi$ under which $\pi$ would be optimal.

\begin{lemma}[Quadratic growth gap of suboptimal portfolio]
\label{lem:growth gap}
For any admissible portfolio $\pi$, the pointwise growth gap is
\[
    g_t^\star - g_t^\pi = \frac{1}{2} \norm{\theta_t - \sigma_t^\intercal \pi_t}^2 \geq 0.
\]
Consequently, the integrated expectation satisfies
\[
    \mathbb{E}_P\left[ \log \frac{X_T^{\pi^\star}}{X_T^\pi} \right]
    = \frac{1}{2} \mathbb{E}_P\left[ \int_0^T \norm{\theta_s - \sigma_s^\intercal \pi_s}^2  ds \right].
\]
\end{lemma}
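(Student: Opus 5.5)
The plan is to obtain the pointwise identity by subtracting the two growth rates from Theorem~\ref{thm:Kelly growth}, and then to integrate the log-wealth dynamics and take expectations.

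Write $z_t = \sigma_t^\intercal \pi_t$. By Theorem~\ref{thm:Kelly growth}, $g_t^\star = r_t + \frac12\norm{\theta_t}^2$ and $g_t^\pi = r_t + \inprod{z_t}{\theta_t} - \frac12\norm{z_t}^2$. Subtracting cancels $r_t$ and leaves
\[
\tfrac12\norm{\theta_t}^2 - \inprod{z_t}{\theta_t} + \tfrac12\norm{z_t}^2 .
\]
This is exactly $\frac12\norm{\theta_t - z_t}^2$ by expanding the square. Nonnegativity is then immediate. Following the Remark after Theorem~\ref{thm:Kelly growth}, I will take $\theta_t$ to be the minimum-norm market price of risk. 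This guarantees that $\pi^\star$ with $\sigma_t^\intercal\pi_t^\star = \theta_t$ exists, so that $g_t^\star$ is attained by an actual portfolio.

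For the integrated statement, the proof of Theorem~\ref{thm:Kelly growth} gives $dY_t^\pi = g_t^\pi\,dt + \pi_t^\intercal\sigma_t\,dW_t$, and likewise for $\pi^\star$. Integrating both from $0$ to $T$ and subtracting, with the common initial wealth $X_0$ cancelling, gives
\[
\log\frac{X_T^{\pi^\star}}{X_T^\pi} = \int_0^T (g_s^\star - g_s^\pi)\,ds + \int_0^T (\theta_s - \sigma_s^\intercal\pi_s)^\intercal dW_s .
\]
Here I use $(\pi_s^\star)^\intercal\sigma_s = \theta_s^\intercal$. Taking $\mathbb{E}_P$, the stochastic integral has zero mean by the standing integrability assumption that all It\^o integrals are true martingales; this requires $\mathbb{E}_P\int_0^T\norm{\theta_s-\sigma_s^\intercal\pi_s}^2ds<\infty$, which also makes the right-hand side finite. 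Substituting the pointwise identity into the $ds$ integral then yields the claim.

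The only delicate point is this martingale step: justifying that the expectation of the stochastic integral vanishes rather than merely being a local martingale. I will handle it by invoking the paper's standing assumption of finite expectations together with It\^o isometry. Everything else is algebra.
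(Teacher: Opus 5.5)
Your proposal is correct and follows essentially the same route as the paper: complete the square to get the pointwise gap, then integrate the log-wealth dynamics and use the standing martingale assumption to drop the stochastic integral $\int_0^T(\theta_s-\sigma_s^\intercal\pi_s)^\intercal dW_s$. Your explicit use of the minimum-norm $\theta_t$ (so that $\pi^\star$ actually exists) and of the common initial wealth makes precise what the paper leaves implicit; note that the condition $\mathbb{E}_P\int_0^T\norm{\theta_s-\sigma_s^\intercal\pi_s}^2ds<\infty$ is sufficient for the zero mean via It\^o isometry, not strictly required.
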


\begin{proof}
The pointwise identity follows from completing the square,
\begin{align*}
    g_t^\star - g_t^\pi
    &=   \left( r_t + \frac{1}{2} \norm{\theta_t}^2 \right)
        - \left( r_t + \inprod{ \sigma_t^\intercal \pi_t }{ \theta_t } - \frac{1}{2} \norm{\sigma_t^\intercal \pi_t}^2 \right) \\
    &= \frac{1}{2} \norm{\theta_t - \sigma_t^\intercal \pi_t}^2.
\end{align*}
Integrating and taking expectations gives
\begin{align*}
    \mathbb{E}_P\left[ Y_T^{\pi^\star} - Y_T^\pi \right]
    &= \mathbb{E}_P\left[ \int_0^T (g_t^\star - g_t^\pi)  dt \right] + \mathbb{E}_P\left[ \int_0^T (\pi_t^\star - \pi_t)^\intercal \sigma_t  dW_t \right].
\end{align*}
The stochastic integral has mean zero under the admissibility and integrability conditions, yielding the result.
See \citet{Platen_2006} for the num\'eraire property and supermartingale argument.
\end{proof}

\begin{theorem}[Objective suboptimality as a relative entropy]
    \label{thm:objective growth gap}
    Let $\pi^\star$ be the growth-optimal portfolio and $\pi \neq \pi^\star$ be a suboptimal portfolio.
    Let measure $P^\pi$ be a measure under which the volatility matrix $\sigma_t$ coincides with that under $P$ and $\pi$ is growth-optimal.
    The growth gap (under $P$) between $\pi^\star$ and $\pi$ is equal to the relative entropy of $P$ to $P^\pi$,
    \begin{equation}
        \mathbb{E}_P\left[ \log \frac{X_T^{\pi^\star}}{X_T^\pi} \right] = D_{\mathrm{KL}}(P \parallel P^\pi)
    \end{equation}
\end{theorem}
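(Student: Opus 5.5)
The plan is to construct $P^\pi$ explicitly through a Girsanov change of measure, compute $\log(dP/dP^\pi)$ directly, and compare its $P$-expectation with the right-hand side of Lemma~\ref{lem:growth gap}.

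\textbf{Construction of $P^\pi$.} Under $P^\pi$ the volatility $\sigma_t$ is unchanged, and by Theorem~\ref{thm:Kelly growth} $\pi$ is growth-optimal exactly when the market price of risk under $P^\pi$ is $\theta^\pi_t := \sigma_t^\intercal \pi_t$. We therefore need a Brownian motion $W^\pi$ under $P^\pi$ with
\[
\mu_{e,t}\,dt + \sigma_t\,dW_t = \sigma_t\theta^\pi_t\,dt + \sigma_t\,dW^\pi_t .
\]
Set $\delta_t := \theta_t - \sigma_t^\intercal \pi_t$ and $W^\pi_t := W_t + \int_0^t \delta_s\,ds$. By Girsanov, $W^\pi$ is a $P^\pi$-Brownian motion if
\[
\frac{dP^\pi}{dP}\Big|_{\mathcal{F}_T} = \mathcal{E}\Bigl(-\int_0^\cdot \delta_s^\intercal dW_s\Bigr)_T
= \exp\Bigl(-\int_0^T \delta_s^\intercal dW_s - \frac12\int_0^T \norm{\delta_s}^2 ds\Bigr).
\]
Novikov, which is assumed, makes this a true martingale, so $P^\pi$ is a probability measure equivalent to $P$ on $\mathcal{F}_T$. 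Taking $\theta_t$ as the minimum-norm price of risk (see the Remark) gives $\delta_t \in \operatorname{range}(\sigma_t^\intercal)$. Hence, among all Girsanov kernels $\delta + \ker(\sigma_t)$ that produce the same tilted drift, this choice has minimal norm.

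\textbf{Entropy computation.} Inverting the density,
\[
\log \frac{dP}{dP^\pi} = \int_0^T \delta_s^\intercal dW_s + \frac12\int_0^T \norm{\delta_s}^2 ds .
\]
Taking $P$-expectations, the stochastic integral has zero mean under the standing integrability assumption. This gives
\[
D_{\mathrm{KL}}(P\parallel P^\pi) = \frac12\,\mathbb{E}_P\Bigl[\int_0^T \norm{\theta_s-\sigma_s^\intercal\pi_s}^2ds\Bigr],
\]
which is exactly the right-hand side of Lemma~\ref{lem:growth gap}. Invoking the lemma then yields the claimed identity.

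\textbf{Main obstacle: uniqueness of $P^\pi$.} The theorem speaks of ``a measure'' $P^\pi$, but the specification ``same $\sigma$, $\pi$ optimal'' determines only the drift $\sigma_t\theta^\pi_t$. Two sources of non-uniqueness remain.

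\begin{itemize}
\item \emph{Components in $\ker\sigma_t$.} The Girsanov kernel may carry extra components in $\ker\sigma_t$. These add $\tfrac12\mathbb{E}_P\int\norm{\cdot}^2$ to the divergence without changing the drift.
\item \emph{Market price of risk outside the range.} $\theta^\pi$ could be chosen outside $\operatorname{range}(\sigma_t^\intercal)$.
\end{itemize}

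I would handle both by stating explicitly that $P^\pi$ is the minimal-entropy such measure, i.e.\ the canonical one with $\theta^\pi\in\operatorname{range}(\sigma^\intercal)$, consistent with the Remark after Theorem~\ref{thm:Kelly growth}. Writing $\delta=\delta_\parallel+\delta_\perp$ with $\delta_\perp \in \ker\sigma_t$, the divergence is $\tfrac12\mathbb{E}_P\int(\norm{\delta_\parallel}^2+\norm{\delta_\perp}^2)\,ds$. It is therefore minimised, and equals the growth gap, precisely when $\delta_\perp=0$. For a general non-canonical tilt one only gets the inequality $\ge$.

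A secondary technical point is integrability of $\int\delta^\intercal dW$ under $P$. This follows from $\mathbb{E}_P\int\norm{\delta}^2<\infty$, which is implied by finiteness of the growth gap.
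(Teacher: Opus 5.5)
Your proposal is correct and follows essentially the same route as the paper. You construct $P^\pi$ by Girsanov with kernel $\theta_t-\sigma_t^\intercal\pi_t$, compute $D_{\mathrm{KL}}(P\parallel P^\pi)=\frac12\mathbb{E}_P\int_0^T\norm{\theta_s-\sigma_s^\intercal\pi_s}^2ds$ by using that the It\^o integral has mean zero, and match this with Lemma~\ref{lem:growth gap}. Your caveat is also correct and sharpens the paper: the identity holds only for the canonical (minimal-entropy) tilt, and only $\ge$ holds for kernels with a $\ker\sigma_t$ component. The paper leaves this implicit by simply defining $P^\pi$ with that particular kernel.
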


\begin{proof}
First we introduce a tilted measure under which $\pi$ is optimal.
Under the physical measure $P$, the market price of risk is $\theta_t$.
For portfolio $\pi$, define its associated (implied) market price of risk,
\begin{equation}
    \label{eqn:portfolio}
    \theta_t^\pi := \sigma_t^\intercal \pi_t.
\end{equation}
We define the measure $P^\pi$ via Girsanov's theorem \citep{Karatzas_1998, Oksendal_2003, Leonard_2012}. Lemma~\ref{lem:growth gap} supplies the finite-entropy condition. Under $P^\pi$, the asset dynamics have market price of risk $\theta_t^\pi$,
\[
  \frac{dS_t}{S_t}
  = r_t \mathbf{1}  dt + \sigma_t \theta_t^\pi  dt + \sigma_t  dW_t^\pi,
\]
where  $W^\pi_t = W_t + \int_0^t (\theta_s - \theta_s^\pi) ds$ is a $P^\pi$-Brownian motion,
with Radon-Nikodym derivative
\[
  \frac{dP^\pi}{dP}\bigg|_{\mathcal{F}_T}
  = \exp\left( -\int_0^T (\theta_s - \theta_s^\pi)^\intercal  dW_s - \frac{1}{2} \int_0^T \norm{\theta_s - \theta_s^\pi}^2  ds \right).
\]
Under $P^\pi$, the growth rate of portfolio $\pi$ is
\[
    g_t^{\pi}(P^\pi)
    = r_t + \inprod{ \sigma_t^\intercal \pi_t }{ \theta_t^\pi } - \frac{1}{2} \|\sigma_t^\intercal \pi_t\|^2
    = r_t + \frac{1}{2} \|\theta_t^\pi\|^2,
\]
which is the optimal growth rate under $P^\pi$ because $\sigma_t^\intercal \pi_t = \theta_t^\pi$ by \eqref{eqn:portfolio}.
The relative entropy is
\[
    D_{\mathrm{KL}}(P \parallel P^\pi)
    = \mathbb{E}_P\left[ -\log \frac{dP^\pi}{dP} \right].
\]
Substituting the Radon-Nikodym derivative
and taking expectations (the It\^o integral is a $P$-martingale with mean zero) gives
\begin{equation}
    \label{eqn:path-space-kl}
    D_{\mathrm{KL}}(P \parallel P^\pi)
    = \frac{1}{2} \mathbb{E}_P\left[ \int_0^T \|\theta_s - \theta_s^\pi\|^2  ds \right].
\end{equation}

The growth gap under $P$ (using Lemma~\ref{lem:growth gap}) is
\begin{align*}
    \mathbb{E}_P\left[ \log \frac{X_T^{\pi^\star}}{X_T^\pi} \right]
    &= \frac{1}{2} \mathbb{E}_P\left[ \int_0^T \|\theta_s - \sigma_s^\intercal \pi_s\|^2  ds \right] \\
    &= \frac{1}{2} \mathbb{E}_P\left[ \int_0^T \|\theta_s - \theta_s^\pi\|^2  ds \right] \\
    &= D_{\mathrm{KL}}(P \parallel P^\pi).
\end{align*}
The equality holds provided the Novikov condition $\mathbb{E}_P[\exp(\frac{1}{2}\int_0^T \|\theta_s\|^2\,ds)] < \infty$ is satisfied, ensuring the stochastic integral is a true martingale.

\end{proof}

\begin{theorem}[Subjective outperformance as a relative entropy]
    \label{thm:subjective growth gap}
    Let $\pi^\star$, $\pi$ and $P^\pi$ be defined as in Theorem~\ref{thm:objective growth gap}.
    Under the subjective measure $P^\pi$, the portfolio $\pi$ outperforms $\pi^\star$ by the relative entropy of $P^\pi$ to $P$
    \begin{equation}
        \mathbb{E}_{P^\pi}\left[ \log \frac{X_T^\pi}{X_T^{\pi^\star}} \right] = D_{\mathrm{KL}}(P^\pi \parallel P)
    \end{equation}
\end{theorem}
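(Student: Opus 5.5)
The plan mirrors the proof of Theorem~\ref{thm:objective growth gap} with the roles of $P$ and $P^\pi$ interchanged. We first rewrite the log-wealth dynamics of both portfolios under $P^\pi$. Since $W_t = W^\pi_t - \int_0^t (\theta_s - \theta_s^\pi)\,ds$ and the market price of risk under $P^\pi$ is $\theta^\pi_t$, the computation in Theorem~\ref{thm:Kelly growth} applies verbatim with $\theta_t$ replaced by $\theta_t^\pi$ and $W$ by $W^\pi$. This gives, under $P^\pi$, the drifts
\[
  g_t^\pi(P^\pi) = r_t + \tfrac12\norm{\theta_t^\pi}^2, \qquad
  g_t^{\pi^\star}(P^\pi) = r_t + \inprod{\theta_t}{\theta_t^\pi} - \tfrac12\norm{\theta_t}^2,
\]
where we use $\sigma_t^\intercal\pi_t = \theta^\pi_t$ and $\sigma_t^\intercal\pi_t^\star = \theta_t$. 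Completing the square, exactly as in Lemma~\ref{lem:growth gap} but with $P^\pi$ as reference measure and $\pi$ as the optimiser, yields
\[
  g_t^\pi(P^\pi) - g_t^{\pi^\star}(P^\pi) = \tfrac12\norm{\theta_t - \theta_t^\pi}^2 .
\]
Integrating and taking $P^\pi$-expectations, the stochastic integral $\int_0^T(\pi_t - \pi_t^\star)^\intercal\sigma_t\,dW^\pi_t$ has zero mean. Hence
\[
  \mathbb{E}_{P^\pi}\Bigl[\log \tfrac{X_T^\pi}{X_T^{\pi^\star}}\Bigr] = \tfrac12\,\mathbb{E}_{P^\pi}\Bigl[\int_0^T \norm{\theta_s-\theta_s^\pi}^2\,ds\Bigr].
\]

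Second, we compute $D_{\mathrm{KL}}(P^\pi\parallel P) = \mathbb{E}_{P^\pi}[\log \frac{dP^\pi}{dP}]$ from the Radon--Nikodym density in the proof of Theorem~\ref{thm:objective growth gap}. The key step is to rewrite the exponent in terms of the $P^\pi$-Brownian motion. Substituting $dW_s = dW_s^\pi - (\theta_s-\theta_s^\pi)\,ds$ gives
\[
  \log\frac{dP^\pi}{dP} = -\int_0^T (\theta_s-\theta_s^\pi)^\intercal dW^\pi_s + \tfrac12\int_0^T\norm{\theta_s-\theta_s^\pi}^2\,ds .
\]
The sign of the quadratic term flips relative to the $P$-representation. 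Taking $P^\pi$-expectations kills the Itô integral and leaves $\tfrac12\,\mathbb{E}_{P^\pi}[\int_0^T\norm{\theta_s-\theta_s^\pi}^2ds]$. This matches the first step and proves the claim.

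The main obstacle is justifying that both stochastic integrals are true $P^\pi$-martingales, or at least have zero $P^\pi$-mean, and that the expectation on the right-hand side is finite. Lemma~\ref{lem:growth gap} controls the integrand only under $P$, not under $P^\pi$, so this finiteness is not automatic from what has been proved. I would handle it in one of two ways. The first is to invoke the blanket integrability assumption, now stated under $P^\pi$, for example a Novikov condition on $\theta-\theta^\pi$ with respect to $P^\pi$. The second is a localisation argument: stop at $\tau_n = \inf\{t:\int_0^t\norm{\theta_s-\theta_s^\pi}^2ds\ge n\}\wedge T$, obtain the identity on $[0,\tau_n]$, and pass to the limit by monotone convergence on the quadratic term.

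The resulting quantity equals $\tfrac12\,\mathbb{E}_{P^\pi}[\cdot]$, not $\tfrac12\,\mathbb{E}_P[\cdot]$. It is therefore generally different from the objective gap, which explains why the two KL divergences are asymmetric.
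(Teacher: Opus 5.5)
Your proposal is correct and follows essentially the same route as the paper: rewrite both log-wealth SDEs under $P^\pi$ via $dW_t = dW_t^\pi - (\theta_t-\theta_t^\pi)\,dt$, complete the square to obtain the drift gap $\tfrac12\norm{\theta_t-\theta_t^\pi}^2$, and match its integrated $P^\pi$-expectation with $D_{\mathrm{KL}}(P^\pi\parallel P)=\tfrac12\,\mathbb{E}_{P^\pi}\bigl[\int_0^T\norm{\theta_s-\theta_s^\pi}^2\,ds\bigr]$. Your explicit substitution into the log-density, which exhibits the sign flip of the quadratic term, spells out what the paper compresses into ``changing the sign of the path-space drift density.'' You also correctly note that the zero $P^\pi$-mean of the stochastic integrals rests on the blanket integrability assumption rather than on Lemma~\ref{lem:growth gap}, a point the paper passes over.
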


\begin{proof}
Under $P^\pi$, the portfolio $\pi$ is growth-optimal by construction. We compute the expected log-wealth ratio directly using the drift adjustment under Girsanov.

Recall that $W_t^\pi = W_t + \int_0^t (\theta_s - \theta_s^\pi) ds$ is a $P^\pi$-Brownian motion, so $dW_t = dW_t^\pi - (\theta_t - \theta_t^\pi) dt$. Substituting into the log-wealth dynamics
\begin{align*}
    dY_t^{\pi^\star} &= \left( r_t + \theta_t^\intercal \theta_t^\pi - \tfrac{1}{2}\|\theta_t\|^2 \right) dt + \theta_t^\intercal dW_t^\pi, \\
    dY_t^\pi &= \left( r_t + \tfrac{1}{2}\|\theta_t^\pi\|^2 \right) dt + \theta_t^{\pi\intercal} dW_t^\pi.
\end{align*}
The drift of $Y_t^\pi$ under $P^\pi$ is the optimal growth rate for that measure. Using the drifts from the SDEs above, the gap is
\[
    \left( r_t + \theta_t^\intercal \theta_t^\pi - \tfrac{1}{2}\|\theta_t\|^2 \right) - \left( r_t + \tfrac{1}{2}\|\theta_t^\pi\|^2 \right) = -\tfrac{1}{2}\|\theta_t - \theta_t^\pi\|^2.
\]
Integrating and taking expectations under $P^\pi$ (the stochastic integral with respect to $W^\pi$ has mean zero)
\[
    \mathbb{E}_{P^\pi}\left[ Y_T^{\pi^\star} - Y_T^\pi \right] = -\frac{1}{2} \mathbb{E}_{P^\pi}\left[ \int_0^T \|\theta_s - \theta_s^\pi\|^2 ds \right].
\]
Changing the sign of the path-space drift density under Girsanov gives $D_{\mathrm{KL}}(P^\pi \parallel P) = \frac{1}{2} \mathbb{E}_{P^\pi}\left[ \int_0^T \|\theta_s - \theta_s^\pi\|^2 ds \right]$, whence
\[
    \mathbb{E}_{P^\pi}\left[ Y_T^{\pi^\star} - Y_T^\pi \right] = -D_{\mathrm{KL}}(P^\pi \parallel P).
\]
Therefore
\[
    \mathbb{E}_{P^\pi}\left[ \log \frac{X_T^\pi}{X_T^{\pi^\star}} \right] = D_{\mathrm{KL}}(P^\pi \parallel P). \qedhere
\]
\end{proof}

\begin{remark}[Asymmetry of objective and subjective perspectives]
Theorems~\ref{thm:objective growth gap} and~\ref{thm:subjective growth gap} form a dual pair
\begin{align*}
    \text{Objective:} \quad & \mathbb{E}_P\left[ \log \frac{X_T^{\pi^\star}}{X_T^\pi} \right] = D_{\mathrm{KL}}(P \parallel P^\pi) \geq 0, \\
    \text{Subjective:} \quad & \mathbb{E}_{P^\pi}\left[ \log \frac{X_T^\pi}{X_T^{\pi^\star}} \right] = D_{\mathrm{KL}}(P^\pi \parallel P) \geq 0.
\end{align*}
Both right-hand sides are non-negative. Under the true measure $P$, the optimal portfolio $\pi^\star$ outperforms; under the subjective measure $P^\pi$, the suboptimal portfolio $\pi$ appears to outperform.
\end{remark}

\begin{remark}[Equality of KL divergences for deterministic market price of risk]
When the market price of risk $\theta_s$ and the portfolio-implied price of risk $\theta_s^\pi$ are deterministic functions of time (possibly time-varying but non-random), the relative entropy in both directions coincides
\[
D_{\mathrm{KL}}(P \parallel P^\pi) = D_{\mathrm{KL}}(P^\pi \parallel P) = \frac{1}{2}\int_0^T \|\theta_s - \theta_s^\pi\|^2 ds.
\]
In particular, when both are constant ($\theta_s = \theta$, $\theta_s^\pi = \theta^\pi$ for all $s$), we have
\[
D_{\mathrm{KL}}(P \parallel P^\pi) = D_{\mathrm{KL}}(P^\pi \parallel P) = \frac{T}{2}\|\theta - \theta^\pi\|^2.
\]
This equality holds because deterministic processes have identical distributions under $P$ and $P^\pi$.
When $\theta_s$ or $\theta_s^\pi$ are stochastic, the two KL divergences generally differ.
\end{remark}

\bibliographystyle{abbrvnat}
\bibliography{bibliography}

\end{document}